\def\figref#1{Figure~\ref{#1}}
\def\tabref#1{Table~\ref{#1}}
\def\secref#1{Section~\ref{#1}}
\def\eqref#1{Equation~(\ref{#1})}
\def\appref#1{appendix~\ref{#1}}
\def\1{\bm{1}}
\def\0{\bm{0}}
\def\vb{{\bm{b}}}
\def\vd{{\bm{d}}}
\def\ve{{\bm{e}}}
\def\vv{{\bm{v}}}
\def\vw{{\bm{w}}}
\def\vx{{\bm{x}}}
\def\evd{{d}}
\def\eve{{e}}
\def\mA{{\bm{A}}}
\def\mB{{\bm{B}}}
\def\mC{{\bm{C}}}
\def\mE{{\bm{E}}}
\def\mI{{\bm{I}}}
\def\mP{{\bm{P}}}
\def\mQ{{\bm{Q}}}
\def\mR{{\bm{R}}}
\def\mU{{\bm{U}}}
\def\mV{{\bm{V}}}
\DeclareMathAlphabet{\mathsfit}{\encodingdefault}{\sfdefault}{m}{sl}
\SetMathAlphabet{\mathsfit}{bold}{\encodingdefault}{\sfdefault}{bx}{n}
\def\emA{{A}}
\def\emB{{B}}
\def\emQ{{Q}}
\def\emR{{R}}
\newcommand{\R}{\mathbb{R}}
\DeclareMathOperator{\Tr}{Tr}
\DeclareMathOperator{\rank}{rank}
\DeclareMathOperator{\diag}{diag}
\DeclareMathOperator{\diagof}{diag\_of}
\newcommand{\ctx}{\mathcal{C}}
\newcommand{\items}{\mathcal{I}}
\newcommand{\Comment}[1]{}
\newcommand{\alex}[1]{\noindent{\textcolor{blue}{\{{\bf Alex:} {\em #1}\}}}}
\newcommand{\michael}[1]{\noindent{\textcolor{red}{\{{\bf Michael:} {\em #1}\}}}}
\newcommand{\naama}[1]{\noindent{\textcolor{magenta}{\{{\bf Naama:} {\em #1}\}}}}
\newcommand{\ariel}[1]{\noindent{\textcolor{orange}{\{{\bf Ariel:} {\em #1}\}}}}
\newcommand{\oren}[1]{\noindent{\textcolor{cyan}{\{{\bf Oren:} {\em #1}\}}}}
\newcommand{\tula}[1]{\noindent{\textcolor{green}{\{{\bf Oren:} {\em #1}\}}}}
\newcommand{\alex}[1]{}
\newcommand{\michael}[1]{}
\newcommand{\naama}[1]{}
\newcommand{\ariel}[1]{}
\newcommand{\oren}[1]{}
\newcommand{\tula}[1]{}
\newcommand{\fm}{\mathrm{FM}}
\newtheorem{proposition}{Proposition}
\def\propref#1{Proposition~\ref{#1}}
\newtheorem{identity}{Identity}
\def\identref#1{Identity~\ref{#1}}
\begin{document}

\title{Low Rank Field-Weighted Factorization Machines\\ for Low Latency Item Recommendation}


\author{Alex Shtoff}
\email{alex.shtoff@yahooinc.com}
\author{Michael Viderman}
\email{viderman@yahooinc.com}
\author{Naama Haramaty-Krasne}
\email{naamah@yahooinc.com}
\affiliation{%
  \institution{Yahoo Research}
  \country{}
  \city{}
}
\author{Oren Somekh}
\email{orens@yahooinc.com}
\author{Ariel Raviv}
\email{arielr@yahooinc.com}
\author{Tularam Ban}
\email{tularam@yahooinc.com}
\affiliation{%
  \institution{Yahoo Research}
  \country{}
  \city{}
}

\renewcommand{\shortauthors}{Shtoff et al.}

%

\newcommand{\fix}{\marginpar{FIX}}
\newcommand{\new}{\marginpar{NEW}}

\begin{abstract}
Factorization machine (FM) variants are widely used in recommendation systems that operate under strict throughput and latency requirements, such as online advertising systems. FMs have two prominent strengths. First, is their ability to model pairwise feature interactions while being resilient to data sparsity by learning factorized representations. Second, their computational graphs facilitate fast inference and training. Moreover, when items are ranked as a part of a query for each incoming user, these graphs facilitate computing the portion stemming from the user and context fields only once per query.  Thus, the computational cost for each ranked item is proportional only to the number of fields that vary among the ranked items. Consequently, in terms of inference cost, the number of user or context fields is practically unlimited.

More advanced variants of FMs, such as field-aware and field-weighted FMs, provide better accuracy by learning a representation of field-wise interactions, but require computing all pairwise interaction terms explicitly. In particular, the computational cost during inference is proportional to the square of the number of fields, including user, context, and item. When the number of fields is large, this is prohibitive in systems with strict latency constraints, and imposes a limit on the number of user and context fields for a given computational budget. To mitigate this caveat, heuristic pruning of low intensity field interactions is commonly used to accelerate inference. 

In this work we propose an alternative to the pruning heuristic in field-weighted FMs using a diagonal plus symmetric low-rank decomposition. Our technique reduces the computational cost of inference, by allowing it to be proportional to the number of item fields only. Using a set of experiments on real-world datasets, we show that aggressive rank reduction outperforms similarly aggressive pruning, both in terms of accuracy and item recommendation speed. Beyond computational complexity analysis, we corroborate our claim of faster inference experimentally, both via a synthetic test, and by having deployed our solution to a major online advertising system, where we observed significant ranking latency improvements. We made the code to reproduce the results on public datasets and synthetic tests available at \url{https://anonymous.4open.science/r/pytorch-fm-0EC0}.
\end{abstract}

\begin{CCSXML}
<ccs2012>
   <concept>
       <concept_id>10010147.10010257.10010293.10010309</concept_id>
       <concept_desc>Computing methodologies~Factorization methods</concept_desc>
       <concept_significance>500</concept_significance>
       </concept>
   <concept>
       <concept_id>10002951.10003260.10003272</concept_id>
       <concept_desc>Information systems~Online advertising</concept_desc>
       <concept_significance>300</concept_significance>
       </concept>
   <concept>
       <concept_id>10002950.10003714.10003715.10003719</concept_id>
       <concept_desc>Mathematics of computing~Computations on matrices</concept_desc>
       <concept_significance>500</concept_significance>
       </concept>
 </ccs2012>
\end{CCSXML}

\ccsdesc[500]{Computing methodologies~Factorization methods}
\ccsdesc[300]{Information systems~Online advertising}
\ccsdesc[500]{Mathematics of computing~Computations on matrices}

\keywords{Recommender systems, Factorization machines, Low rank factorization}


\maketitle

\section{Introduction}
Recommendation systems driven by machine-learned predictive models are widely used throughout the industry for a large variety of applications, from movie recommendation, to ad ranking in online advertising systems. In some applications, recommendation quality is the main objective, where sophisticated and computationally complex deep learning techniques are used to capture the affinity between the users and the recommended items. But other applications require striking an intricate balance between the accuracy of the predictive models, their training and inference speed. For example, real-time bidding systems in programmatic advertising are required to compute a ranking score for a large number of ads in a matter of a few milliseconds, and to train quickly in order to adapt to the ever-changing ad marketplace conditions. Such systems often deploy variants of the celebrated factorization machine (FM) models \citep{fm} in order to overcome data sparsity issues, while excelling at achieving a good balance between prediction accuracy and speed. 

\citet{fm} also showed that while FMs model pairwise feature interactions, whose number is quadratic in the number of features, there is an equivalent formulation of FMs whose computational complexity is \emph{linear} in the number features. This already facilitates fast training. Moreover, when ranking items for a given user in a given context, the user and context features are the same for all items. The equivalent formulation allows caching the user and context computation results once, meaning that the computational cost per item is linear in the number of item features only. Therefore, factorization machines are also extremely efficient for ranking.

Many real-world applications involve large-scale multi-field data (e.g., gender and item category). However, FMs have a limited predictive accuracy, since they fail to capture the fact that the same feature can behave differently when interacting with features from different fields. To resolve this issue, many variants that incorporate field information have been proposed in recent years, including field-aware \citep{ffm}, field-weighted \citep{fwfm}, and field-embedded \citep{fefm,fmfm} factorization machines. Unfortunately, none of these variants admit an equivalent formulation whose computational complexity is linear in the number of features. This poses a challenge when building cost-effective, large-scale real-time recommendation systems, as the additional gain from incorporating field information comes at the cost of additional computing power.

The focus of this work is the \emph{field-weighted factorization machine} (FwFM) variant \cite{fwfm}, under the large-scale and low inference latency regime. It is similar to a regular factorization machine, with an additional symmetric matrix of parameters modeling the strength of pairwise field interactions. It is attractive in practice due to the fact that in terms of memory consumption and the number of parameters, it is on par with a regular FM. In addition, it is less prone to over-fitting, as pointed out by \citet{ffm} and \citet{fwfm}, and admits a simple heuristic for reducing the computational complexity at the inference stage by pruning low-magnitude field interactions. However, pruning has some cost in terms of accuracy. In this work, we devise a different way to significantly reduce the computational cost of field-weighted factorization machines by decomposition the matrix of pairwise field interactions. Motivated by the visualization of the field interaction matrices in \cite{fwfm}, that resembles a block-like structure due to field groups exhibiting similar interaction behavior, we employ the well-known diagonal plus low-rank (DPLR) matrix decomposition, to produce an FwFM variant we call DPLR-FwFM. We show that our idea facilitates utilizing the additional accuracy provided by incorporating field information, while benefiting from a per-item computational cost that is linear in the number of item fields, albeit slower than regular FM by a factor proportional to the rank of the low-rank part of the decomposition.

We evaluate our technique on public datasets, and on proprietary data from a large-scale online advertising system of a major company. Our results demonstrate that, in practice, DPLR-FwFM with extremely low ranks outperform aggressively pruned FwFM models both in terms of latency and accuracy. \Comment{Moreover, our experiments show that a "post-hoc" approach of creating a DPLR-FwFM model after training a regular FwFM by decomposing the field interaction matrix is worse than both a regular FwFM, and a pruned FwFM of comparable computational cost. \ariel{too detailed for intro}\alex{Is this better?}\ariel{I would drop the entire sentence}}. Finally, we deploy our solution in an online advertising system of a major company, and show that the latency incurred by a prediction module based on our DPLR-FwFM model is better than that of the module currently used in production and is based on a pruned FwFM.

To summarize, the main contributions of our paper are:
\begin{enumerate}
    \item Reformulate FwFM models to make their computational cost on par, or higher by a small constant factor of our choice, compared to regular FMs, and significantly cheaper than FwFMs. This, while benefiting from the higher accuracy provided by incorporating field information.
    \item Show that the accuracy of the obtained models is on par or higher than pruned FwFM models on public and proprietary data-sets.
    \item Demonstrate that, in practice, in a real-world online advertising system, our approach can significantly reduce the computational costs, thus achieving lower latency with a given computational budget, or reducing the computational power required to achieve a given latency. 
\end{enumerate}
\section{Related work}
\Comment{
\alex{TODO - Ariel, Oren, Naama}
\begin{itemize}
    \item Recommendation technologies, Collaborative filtering, matrix factorization.
    \item Factorization machine, special cases FwFM, FFM, FmFM, OFFSET + Deep models (Deepfm, xdeepfm. Wide and deep) and why there are applications that don't use them (latency, training time).
    \item Almagor and Hoshen's work
    \item The idea of low-rank factorization - classical MF, LORA for language models, diagonal plus low-rank approximation of covariance matrices for normal distributions.
\end{itemize}
}
\Comment{
Original Naama's part:
Recommendation systems are used to help users find new items or services based on information about the user or the recommended item \cite{resnick1997recommender}. In this domain, Collaborative Filtering (CF) stands out as a prominent approach that operates on the premise that individuals with shared preferences in the past will continue to share similar preferences in the future.
Matrix Factorization (MF) emerges as a leading solution within this field, primarily driven by the exponential expansion of data and the issue of sparsity. It mainly relies on a latent factor model, representing the rating matrix as the result of multiplying a user factor matrix by an item factor matrix \cite{koren2009matrix}. Such systems find applications in various domains, including movie recommendation \cite{bell2007lessons}, music recommendation \cite{aizenberg2012build}, ad matching \cite{aharon2013off} and more. 
Factorization machine (FM) variants excel in benchmark tabular classification tasks, particularly when dealing with a large number of interactions and requiring fast predictions. This family includes several members as the factorization machine (FM) \cite{rendle2010factorization}, field-aware factorization machine (FFM) \cite{juan2016field}, field weighted factorization machine (FwFM) \cite{pan2018field} and the field-matrixed factorization machine (FmFM) \cite{sun2021fm2}.
For handling a wider range of features with non-linear interactions in the data, \cite{cheng2016wide} presented Wide \& Deep learning approach that trains a network that combines a linear model and a deep neural network. In \cite{guo2017deepfm}, they introduced the DeepFM approach, which replaces the linear model with FMs to capture low-order feature interactions. An increasing amount of work is also focused on improving predictions through deep learning as xDeepFM \cite{lian2018xdeepfm}, NFM \cite{he2017neural}, DCN \cite{wang2017deep}, AutoInt \cite{song2019autoint}, DeepLight \cite{deng2021deeplight} and more. While these methods offer improved predictions, due to the high latency and long training times, many applications still prefer FM-based approaches.
}

Recommendation technologies such as \textit{Collaborative Filtering} (CF) \citep{goldberg1992using}, help users discover new items based on past preferences. \textit{Matrix Factorization} (MF) is a leading approach in CF, addressing data expansion and sparsity by using a latent factor model \citep{koren2009matrix}. Such systems find applications in various domains, including movie recommendation \citep{bell2007lessons}, music recommendation \citep{aizenberg2012build}, ad matching \citep{aharon2013off} and more. 
\textit{Factorization machine} (FM) variants excel in benchmark tabular classification tasks, particularly when dealing with a large number of interactions and requiring fast predictions. This family includes several members as the original FM \citep{rendle2010factorization}, \textit{field-aware factorization machine} (FFM) \citep{juan2016field}, \textit{field weighted factorization machine} (FwFM) \citep{pan2018field} and the \textit{field-matrixed factorization machine} (FmFM) \citep{sun2021fm2}.
For handling a wider range of features with non-linear interactions, \cite{cheng2016wide} presented \textit{Wide \& Deep} learning approach, which trains a network that combines a linear model and a deep neural network. Their work was followed by a wave of deep learning techniques aimed at improving prediction accuracy, while also being sensitive to low-latency efficiency. An essential application that motivated such works, and also the focus of this work, is CTR prediction for online advertising. It has been the topic of many techniques as DeepFM \citep{guo2017deepfm}, xDeepFM \citep{lian2018xdeepfm}, DCN \citep{wang2017deep}, AutoInt \citep{song2019autoint}, DeepLight \citep{deng2021deeplight} and more. While these methods offer improved predictions, their high latency and long training time cause many applications to still prefer FM-based approaches.

\Comment{\oren{mention that we are focused on application that are sensitive to low latency efficient such as CTR prediction for online advertising. In the next sentences focus on approaches that combine FM and NN} }
\Comment{
Their work was followed by a surge of techniques that combine FM and neural network to improve predictions through deep learning, such as DeepFM \cite{guo2017deepfm}, xDeepFM \cite{lian2018xdeepfm}, NFM \cite{he2017neural}, DCN \cite{wang2017deep}, AutoInt \cite{song2019autoint}, DeepLight \cite{deng2021deeplight} and more. While these methods offer improved predictions, their high latency and long training time cause many applications to still prefer FM-based approaches.
}

Low-rank factorizations have emerged as a versatile and powerful tool in \textit{machine learning} (ML) and \textit{artificial intelligence} (AI), finding applications across a spectrum of domains. For instance, low-rank matrix-factorization in the context of recommendation systems \citep{koren2009matrix, rendle2010factorization}, in natural language processing for topic modeling, and dimensionality reduction \citep{blei2003latent}, and more recently, for large-scale pre-training and fine-tuning of models on diverse natural language understanding tasks \citep{hu2021lora}.
In cases where the involved matrices are full-ranked and low-rank factorizations are less effective, \textit{diagonal plus low-rank factorization} (DPLR) may be considered (see the work of \cite{saunderson2012diagonal} and references therein).
DPLR involves approximating a given matrix, often a large, and high-dimensional one, by decomposing it into the sum of two matrices: a diagonal matrix and a low-rank matrix. In statistics, DPLR is used to approximate high-dimensional covariance matrices of multivariate normal distributions \citep{liutkus2017diagonal,ong2018gaussian}. In ML, DPLR is used for enhancing the efficiency of models (see the work of \cite{bingham2019pyro} for an ML library that uses DPLR covariance matrices). For instance, DPLR-based methods have been employed in deep learning architectures for more efficient and accurate models \citep{zhao2016low,tomczak2020efficient,mishkin2018slang}.  Inspired by an observation of \cite{convex_fm}, that the regular FM field-interaction matrix equals a rank-one matrix minus the identity matrix, we apply DPLR to FwFM and improve its training and inference efficiency.

One FM-like approach that resembles our work is of \cite{almagor2022you}, where the authors apply a low-rank factorization to a generalized FM family representation (that includes FM, FFM, FwFM and FmFM as special cases). \Comment{Then, they leveraged the known connection between FM to polynomial neural networks \citep{blondel2016polynomial} to propose a neural network-based framework referred to as \textit{field-wise factorized neural-network} (FiFa). The proposed framework outperforms other models such as FmFM for CTR prediction over tabular data. Furthermore, it exhibits improved performance and inference times compared to deeper methods, such as SAINT \citep{somepalli2021saint}.}At first glance, their method seems to have much in common with our work, as both works use low-rank factorization to represent field interactions. However, there are two key differences. First, \cite{almagor2022you} takes a modeling approach to show that a family of FMs can be implemented as standard feed-forward networks. Our paper takes a different direction, that of  recommendation systems: we focus on using a low-rank factorization to represent the separation into context and item fields, to improve item recommendation accuracy with FwFM models under strict latency constraints. Second, in contrast to \cite{almagor2022you}, our work adds a diagonal component to the low-rank factorization, which plays a key role in the ability to be efficient for real-time systems, as shown in \secref{sec:dplr_math_foundation}.

\Comment{Low-rank factorizations have emerged as a versatile and powerful tool in \textit{machine learning} (ML) and \textit{artificial intelligence} (AI), finding applications across a spectrum of domains. In the context of recommendation systems, matrix factorization techniques, including low-rank approximations, play a pivotal role by uncovering latent patterns in user-item interaction data, enhancing recommendation accuracy \cite{koren2009matrix}\cite{rendle2010factorization}\ariel{this paragraph is too long and redundant in this context}. Additionally, low-rank factorizations have made substantial contributions to natural language processing, where they facilitate the modeling of large-scale textual data, topic modeling, and dimensionality reduction \cite{blei2003latent}. More recently, low-rank adaptation techniques have been applied to language models, such as transformer-based models, to improve their efficiency and scalability, allowing for large-scale pretraining and fine-tuning of models on diverse natural language understanding tasks \cite{hu2021lora}.
In cases where the involved matrices are full-ranked and low-rank factorizations are less effective, \textit{diagonal plus low-rank factorization} (DPLR) may be considered (see \cite{saunderson2012diagonal} and references therein).
DPLR involves approximating a given matrix, often a large, and high-dimensional one, by decomposing it into the sum of two matrices: a diagonal matrix and a low-rank matrix.
In statistics, this technique is used to approximate high-dimensional covariance matrices, particularly in the context of multivariate normal distributions (e.g., \cite{liutkus2017diagonal} and \cite{ong2018gaussian}). \ariel{I think this is redundant}
In ML, DPLR is used for enhancing the efficiency of models (see \cite{bingham2019pyro} for an ML library that uses DPLR covariance matrices). For instance, DPLR-based methods have been employed in deep learning architectures for parameter reduction and regularization, aiding in the training and inference of more efficient and accurate models (e.g., \cite{zhao2016low},\cite{tomczak2020efficient}, and \cite{mishkin2018slang}).  Inspired by an observation of \citet{convex_fm}, that the regular FM field-interaction matrix equals a rank-one matrix minus the identity matrix, we apply DPLR over \ariel{to?} FwFM to improve its training and inference efficiency.}

\section{Background and problem formulation}
In this section, we provide a detailed explanation of the computational efficiency properties of FMs in recommending items, and the efficiency drop introduced by switching to field-aware variants. We then formulate the problem of reducing the gap between the two.

\subsection{Efficiency of factorization machines}
Given a feature vector $\vx\in \R^n$, the FM model proposed by \citet{fm} computes
\[
\Phi_{\fm}(\vx; b_0, \vb, \vw_1, \dots, \vw_n) = b_0 + \langle \vb, \vx \rangle + \sum_{i=1}^n \sum_{j=i+1}^n \langle x_i \vw_i, x_j \vw_j \rangle,
\]
where $b_0 \in \R$, $\vb \in \R^n$, and $\vw_1, \dots, \vw_n \in \R^k$ are trainable parameters. Overall, the model has $1 + n + n k$ trainable parameters, and the direct formula above for $\Phi_{\fm}$ can be computed in $O(n^2 k)$ time. But as \citet{fm} pointed out, the pairwise interaction terms can be re-written as
\begin{equation}\label{eq:fm_feat_fast}
\sum_{i=1}^n \sum_{j=i+1}^n \langle x_i \vw_i, x_j \vw_j \rangle = \frac{1}{2} \left(
 \Bigl \| \sum_{i=1}^n x_i \vw_i \Bigr \|^2 - \sum_{i=1}^n \Bigl \| x_i \vw_i \Bigr \|^2
\right).
\end{equation}
This dramatically reduces the time complexity to $O(n k)$.

In practice, the feature vector encodes a row in a tabular data-set of past interactions between users and items, whose columns, often named \emph{fields}, contain categorical features. Some columns describe the context (including the user), whereas others describe the item. Thus, $\vx$ formally consists of field-wise one-hot encodings, for example:
\[
\vx = (
    \underbrace{
        \underbrace{0, 1, 0, 0, 0,}_{\text{field 1}} 
        \underbrace{1, 0, 0, 0,}_{\text{field 2}}
        \dots
        \underbrace{0, 0, 1, 0, 0, 0,}_{\text{field $m_c$}}
    }_{\text{context fields}}
    \underbrace{
        \underbrace{0, 0, 1, 0,}_{\text{field $m_c + 1$}}
        \dots
        \underbrace{0, 0, 0, 0, 1}_{\text{field $m$}}
    }_{\text{item fields}}
)^T,
\]
Therefore, when computing $\Phi_{\fm}$ we can only use the $m$ nonzero entries of $\vx$ corresponding to the $m$ fields. We note that there may be fields having multiple values, such as a list of movie genres, but we defer their treatment to the discussion of FwFM models in the sequel.

Formally, for a given feature vector $\vx$, let $\ell_1, \dots, \ell_m$ denote the nonzero indices, and define $\vv_i \equiv \vw_{\ell_i}$. Hence, the formula in \eqref{eq:fm_feat_fast} can be reformulated by summing over the fields, rather than the features. Moreover, we can split the summation over all fields to a summation over the context fields $\ctx = \{1, \dots, m_c\}$ and the item fields $\items = \{m_c + 1, \dots, m\}$, and obtain:
\begin{subequations}
\begin{align}
\sum_{i=1}^n \sum_{j=i+1}^n &\langle x_i \vw_i,  x_j \vw_j \rangle  \\
 &= \sum_{i=1}^m \sum_{j=i+1}^m \langle \vv_i, \vv_j \rangle \label{eq:fm_fieldwise} \\
 &= \frac{1}{2} \left(
 \Bigl \| \sum_{i=1}^m \vv_i \Bigr \|^2 - \sum_{i=1}^m \Bigl \| \vv_i \Bigr \|^2
 \right) \\
 &= \frac{1}{2} \left(
 \Bigl \| \sum_{i \in \ctx} \vv_i + \sum_{i \in \items} \vv_i \Bigr \|^2 - \sum_{i \in \ctx} \Bigl \| \vv_i \Bigr \|^2 - \sum_{i\in \items} \Bigl \| \vv_i \Bigr \|^2
 \right) \label{eq:fm_filedwise_sep}
\end{align}
\end{subequations}
By \eqref{eq:fm_filedwise_sep}, we see that when ranking a large number of items for a given context, the sums over $\ctx$ can be computed only \emph{once}. For each item, the computational complexity is $O(|\items| k)$, namely, and it depends largely on the number of \emph{item fields}. Thus, in practice, we can use a model with a large number of user or context features to achieve a high degree of personalization, without incurring a significant computational cost during item ranking.

\subsection{Inefficiency of field-weighted factorization machines}\label{sec:inefficiency}
Field-weighted factorization machines (FwFM) \citep{fwfm} model the varying behavior of a feature belonging to some field when interacting with features from different fields in the form of a trainable symmetric field interaction matrix $\mR \in \R^{m\times m}$. Its components $\emR_{i,j}$ model the \emph{intensity} of the interaction between field $i$ and field $j$. The model's output is
\begin{multline*}
\Phi_{\mathrm{FwFM}}(\vx; b_0, \vb, \vw_1, \dots, \vw_n, \mR) = \\ b_0 + \langle \vb, \vx \rangle + \sum_{i=1}^n \sum_{j=i+1}^n \langle x_i \vw_i, x_j \vw_j \rangle \emR_{f_i,f_j},
\end{multline*}
where $f_i$ is the field corresponding to feature $i$. \citet{fwfm} demonstrate that this model family achieves a significantly higher accuracy compared to a regular FM, and is comparable with other field-aware variants. One attractive property of this variant is its number of parameters and memory requirements - it has only $\frac{m(m-1)}{2}$ additional parameters compares to a regular FM comprising the above-diagonal entries of the field interaction matrix $\mR$.  

Similarly to \eqref{eq:fm_fieldwise}, under the one-hot encoding assumption, given an input $\vx$, the pairwise interaction term can be written in terms of the field vectors $\vv_1, \dots, \vv_m$ that are a subset of the embedding vectors $\vw_1, \dots, \vw_n$ that correspond to the $m$ nonzero entries of $\vx$:
\begin{equation}\label{eq:fieldwise_fwfm}
    \sum_{i=1}^n \sum_{j=i+1}^n \langle x_i \vw_i, x_j \vw_j \rangle \emR_{f_i,f_j} = \sum_{i=1}^m \sum_{j=i+1}^m \langle \vv_i, \vv_j \rangle \emR_{i, j}
\end{equation}
However, the time complexity of computing its output $\Phi_{\mathrm{FwFM}}$ is dominated by the $O(m^2 k)$ complexity of computing the pairwise term in \eqref{eq:fieldwise_fwfm}, which is quadratic in the total number of fields. Compared to the $O(|\items| k)$ per item complexity of a regular FM, which is linear in the number of item fields, FwFMs pose a serious challenge for applications where inference speed is critical. While the quadratic complexity property is shared by many other field-aware variants, such as \citet{ffm, fefm, fmfm}, in this work, we focus on addressing this challenge for FwFM models.

The one-hot encoding assumption does not cover the case when a field has multiple values, such as a list of movie genres. In this case, we assume that multiple components of a field's encoding, that correspond to multiple values, may be nonzero. We handle this case by assuming that a field does not interact with itself, i.e. $R_{f,f} = 0$ for any field $f$. A direct consequence is that the pairwise interaction term can be written in terms of field vectors, as in \eqref{eq:fieldwise_fwfm}, but each vector $v_i$ may be a weighted sum of the corresponding feature embedding vectors. For example, a movie with 3 genres may be encoded by placing $\frac{1}{3}$ in the corresponding components of $\vx$, which will result in an average of the genre embedding vectors.

\subsection{Problem formulation}
A typical approach used in practice to speed up inference is pruning the matrix $\mR$ by zeroing-out entries whose magnitude is below a threshold, as suggested by \citet{fwfm} and \citet{fmfm}. However, as we show in \secref{sec:experiments}, aggressive pruning reduces the accuracy of the model. In this work, we aim to devise a method to achieve inference speed that is proportional to the number of \emph{item fields}, and is only $\rho$ times slower than a regular FM, where $0<\rho\ll m$ is a configurable factor. This, while retaining an accuracy that is comparable with that of a regular FwFM model, and significantly better than pruned model with a similar number of parameters.
\section{Low-rank field-weighted factorization machines}
In this section we reformulate the pairwise field interaction term in \eqref{eq:fieldwise_fwfm} in an equivalent, but significantly more efficient manner. Throughout this section, we assume that the field vectors $\vv_1, \dots, \vv_m \in \R^{k}$ are embedded into the \emph{rows} of the matrix $\mV \in \R^{m \times k}$:
\begin{equation}\label{eq:vector_in_rows}
\mV = \begin{bmatrix}
    \, \textrm{---} \, \vv_1 \, \textrm{---}\,  \\
    \, \textrm{---} \, \vv_2 \, \textrm{---}\,  \\
    \vdots \\
    \, \textrm{---} \, \vv_m \, \textrm{---}\, 
\end{bmatrix}
\end{equation}
Moreover, since \eqref{eq:fieldwise_fwfm} uses only the upper triangular part of $\mR$, we may choose the remaining entries arbitrarily to our convenience, and throughout this section we assume that $\mR$ is symmetric with a zero diagonal. Under this assumption, \eqref{eq:fieldwise_fwfm} can be re-written as
\begin{equation}\label{eq:fieldwise_fwfm_full}
\sum_{i=1}^m \sum_{j=i+1}^m \langle \vv_i, \vv_j \rangle \emR_{i,j} = \frac{1}{2} \sum_{i=1}^m \sum_{j=1}^m \langle \vv_i, \vv_j \rangle \emR_{i,j}.
\end{equation}
In the sequel, we describe our method to efficiently compute the double sum in the right-hand side of \eqref{eq:fieldwise_fwfm_full}. We first review the mathematical background, and then the complete method.

\subsection{Mathematical foundation}\label{sec:dplr_math_foundation}
In this section we present a technical result that provides the motivation for factorizing the matrix $\mR$ to obtain an efficient algorithm. Then, we present an example explaining why the widely used \emph{low-rank factorization} lacks expressive power, and use it to motivate a \emph{diagonal plus low-rank factorization}.

The following identity let us reformulate the pairwise interaction formula as the one in \eqref{eq:fieldwise_fwfm_full} in a matrix form, that is inspired by the reformulation in \citet{convex_fm} for regular FMs.
\begin{identity}\label{ident:pairwise_as_trace}
Let $\mA \in \R^{m \times k}$. Then, for any matrix $\mQ \in \R^{m \times m}$ we have
\begin{equation}\label{eq:pairwise_trace_identity}
    \sum_{i=1}^m \sum_{j=1}^m \langle \mA_{i,:}, \mA_{j,:} \rangle \emQ_{i,j} = \Tr(\mA^T \mQ \mA)
\end{equation}
\end{identity}
\begin{proof}
Recall the definition of the Frobenius inner product of matrices:
\[
\langle \mA, \mB \rangle = \sum_{i=1}^m \sum_{j=1}^n \emA_{i,j} \emB_{i,j} = \Tr(\mA^T \mB),
\]
and the circular shift invariance property of the trace operator \citep[Section 1.1]{matrixcookbook}:
\[
\Tr(\mA \mB \mC) = \Tr(\mB \mC \mA)
\]

For any $1 \leq i,j \leq m$ it holds that $\langle \mA_{i,:}, \mA_{j,:} \rangle = (\mA \mA^T)_{i,j}$, and thus    
\begin{multline*}
\sum_{i=1}^m \sum_{j=1}^m \langle \mA_{i,:}, \mA_{j,:} \rangle  \emQ_{i,j} \\
 = \sum_{i=1}^m \sum_{j=1}^m \left(\mA \mA^T\right)_{i,j} \emQ_{i,j} = \langle \mA \mA^T, \mQ \rangle = \Tr(\mA\mA^T \mQ),
\end{multline*}
where the last two equalities follow from the definition of the Frobenius inner product, and the fact that $\mA\mA^T$ is symmetric. Finally, the circular shift invariance property implies that $\Tr(\mA \mA^T \mQ) = \Tr(\mA^T \mQ \mA)$, completing the proof.
\end{proof}

Since $\mR$ is a real symmetrical square matrix its eigenvalue decomposition is composed of real eigenvalues. If the eigenvalues decay quickly, we can use an approximation obtained by using the $\rho$ largest magnitude eigenvalues, for some $0<\rho\ll m$, in the form $\mR = \mU^T \diag(\ve) \mU$, where $\mU\in \R^{\rho\times m}$ and $\ve \in \R^{\rho}$. 
\Comment{Thus, invoking \identref{ident:pairwise_as_trace} twice we could devise a more efficient algorithm for computing the pairwise interactions, since
\begin{align*}
\sum_{i=1}^m \sum_{j=1}^m \langle \vv_i, \vv_j \rangle \emR_{i,j} = \Tr(\underbrace{(\mU \mV)^T}_{\mP^T} \diag(\ve) \underbrace{(\mU \mV)}_{\mP}) = \sum_{i=1}^\rho \eve_i \| \mP_{i,:} \|^2
\end{align*}
Since $\mU$ has only $\rho$ rows, computing the product $\mU \mV$ costs only $O(\rho m k)$, and the computation becomes significantly more efficient. }

However, a low-rank decomposition lacks the expressive power we need. To see why, consider a regular FM, which is equivalent to an FwFM where all field interactions are 1, meaning that the field-interaction matrix is:
\[
\mR_{\fm} = \begin{pmatrix}
    0 & 1 & \dots & 1 \\
    1 & 0 & \dots & 1 \\
    \vdots & \vdots & \ddots & \vdots \\
    1 & 1 & \dots & 0
\end{pmatrix}
\]
We have $\rank(\mR_\fm) = m$, and its eigenvalues are $\{m-1, -1, \dots, -1 \}$, and hence a low-rank approximation within a reasonable accuracy is impossible. Since a low-rank decomposition does not even have enough expressive to model a regular FM, an FwFM is clearly out of reach. The obstacle stems from the zero diagonal of $\mR$, which is an ``anomaly'' in the matrix. But as \citet{convex_fm} pointed out, this anomaly can be easily handled using a diagonal matrix:
\begin{equation}\label{eq:fm_diag_lowrank}
    \mR_\fm = \1 \1^T - \mI,
\end{equation}
where $\1$ is a column vector whose components are all 1. In other words, the matrix $\mR_\fm$ can be decomposed into a sum of a diagonal matrix and a low-rank matrix. Motivated by the above, we use a diagonal plus low-rank (DPLR) decomposition to facilitate fast inference, as is shown in the proposition below. The low-rank part resembles the eigenvalue decomposition, to be able to model arbitrary symmetric matrices, including indefinite ones.
\begin{proposition}\label{prop:pairwise_low_rank}
Let $\mV$ be as defined in \eqref{eq:vector_in_rows}, let $\mR \in \R^{m \times m}$ be a symmetric matrix, and suppose that $\mU \in \R^{\rho \times m}$, $\ve \in \R^\rho$, and $\vd \in \R^m$ are such that
\begin{equation}\label{eq:dplr}
    \mR = \mU^T \diag(\ve) \mU + \diag(\vd).
\end{equation}
Define $\mP = \mU \mV$. Then,
\begin{equation}\label{eq:efficient_pairwise}
\sum_{i=1}^m \sum_{j=1}^m \langle \vv_i, \vv_j \rangle \emR_{i,j} = \sum_{i=1}^m \evd_i \| \vv_i \|^2 + \sum_{i=1}^\rho \eve_i \| \mP_{i,:} \|^2
\end{equation}
\end{proposition}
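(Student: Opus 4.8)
The plan is to reduce everything to \identref{ident:pairwise_as_trace} applied twice, with linearity of the trace bridging the two applications. First I would invoke the identity with $\mA = \mV$ (so that $\mA_{i,:} = \vv_i$) and $\mQ = \mR$, rewriting the left-hand side of \eqref{eq:efficient_pairwise} as the single trace $\Tr(\mV^T \mR \mV)$. This turns the combinatorial double sum into a compact matrix expression on which the decomposition \eqref{eq:dplr} can act directly.

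Next I would substitute $\mR = \mU^T \diag(\ve) \mU + \diag(\vd)$ and split the trace by linearity into $\Tr(\mV^T \mU^T \diag(\ve) \mU \mV) + \Tr(\mV^T \diag(\vd) \mV)$. For the first summand I would use the definition $\mP = \mU \mV$, so that $\mV^T \mU^T = \mP^T$ and the term becomes $\Tr(\mP^T \diag(\ve) \mP)$. The key observation for both summands is that applying \identref{ident:pairwise_as_trace} in reverse with a \emph{diagonal} weight matrix collapses the double sum to a single sum: when $\mQ = \diag(\vd)$ only the $i = j$ terms survive, giving $\Tr(\mV^T \diag(\vd) \mV) = \sum_{i=1}^m \evd_i \|\vv_i\|^2$, and likewise $\Tr(\mP^T \diag(\ve) \mP) = \sum_{i=1}^\rho \eve_i \|\mP_{i,:}\|^2$. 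Adding the two pieces yields the claimed formula.

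The one point that warrants care, rather than a genuine obstacle, is the change of dimension in the second application of the identity: $\mP = \mU \mV \in \R^{\rho \times k}$ has only $\rho$ rows, so the identity is invoked with $\rho$ playing the role of $m$. Since \identref{ident:pairwise_as_trace} is stated for an arbitrary number of rows, this is immediate, but it is worth noting that the low-rank term is indexed up to $\rho$ while the diagonal term is indexed up to $m$, matching the two distinct summation ranges in \eqref{eq:efficient_pairwise}. Apart from this bookkeeping and the routine verification that a diagonal $\mQ$ zeroes out all off-diagonal contributions, the argument is a direct two-step computation with no substantive difficulty.
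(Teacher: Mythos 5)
Your proposal is correct and follows essentially the same route as the paper's proof: apply \identref{ident:pairwise_as_trace} with $\mQ = \mR$, split the trace by linearity using the decomposition \eqref{eq:dplr}, and then apply the identity again with the diagonal matrices $\diag(\vd)$ and $\diag(\ve)$ (noting that $\mP$ has $\rho$ rows) to collapse each trace to a single weighted sum of squared row norms. The remark about the dimension bookkeeping for $\mP$ is a fair observation but, as you say, immediate from the statement of the identity.
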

\begin{proof}
Invoking \identref{ident:pairwise_as_trace} with $\mA = \mV$ and $\mQ = \mR$, and then using the decomposition in \eqref{eq:dplr} we have
\begin{align*}
\sum_{i=1}^m \sum_{j=1}^m \langle \vv_i, \vv_j \rangle \emR_{i,j} 
 &= \Tr(\mV^T \mR \mV) \\
 &= \Tr\left(\mV^T(\diag(\vd) + \mU^T \cdot \diag(\ve) \cdot \mU) \mV\right) \\
 &= \Tr(\mV^T \diag(\vd) \mV) + \Tr(\underbrace{(\mU \mV)^T}_{\mP^T} \diag(\ve) \underbrace{(\mU \mV)}_{\mP}).
\end{align*}
Invoking \identref{ident:pairwise_as_trace} with $\mA = \mV$ and $\mQ = \diag(\vd)$, we obtain
\[
\Tr(\mV^T \diag(\vd) \mV) = \sum_{i=1}^m \evd_i  \| \vv_i  \|^2,
\]
and again with $\mA = \mP$ and $\mQ = \diag(\ve)$ we obtain
\[
\Tr(\mP^T \diag(\ve) \mP) = \sum_{i=1}^\rho \eve_i  \| \mP_{i,:}   \|^2.
\]
Therefore,
\[
\sum_{i=1}^m \sum_{j=1}^m \langle \vv_i, \vv_j \rangle \emR_{i,j}  = \sum_{i=1}^m d_i  \| \vv_i  \|^2 + \sum_{i=1}^\rho e_i  \| \mP_{i,:}   \|^2
\]
\end{proof}
Now we are ready to present our solution based on \propref{prop:pairwise_low_rank}.

\subsection{The modified model and fast inference algorithm}
Our solution consists of a modification of the FwFM model, and an algorithm that achieves inference during ranking in $O(\rho |\items| k)$ per item.

\subsubsection{The diagonal plus low-rank FwFM model}
Instead of learning the field interaction matrix $\mR$ directly, we learn its decomposition in the diagonal plus low-rank form. Since $\diag(\mR) = \0$, the diagonal component is fully determined by the low-rank decomposition. Formally, we replace the learned parameter $\mR$, with the learned parameters $\mU \in \R^{\rho \times m}$ and  $\ve \in \R^{\rho \times m}$, where $\rho$ is a hyper-parameter. The matrix $\mR$ is formally defined as
\begin{equation}\label{eq:dplr_model_decomposition}
\mR = \mU^T \diag(\ve) \mU + \diag(\vd),
\end{equation}
where 
\[
\vd \equiv -\diagof(\mU^T \diag(\ve) \mU).
\]
In other words, we ensure that $\mR$ is defined to be a symmetric matrix with a zero diagonal by forming a symmetric eigenvalue-like decomposition, and subtracting the diagonal of the resulting matrix. The formal definition of $\mR$ allows us to apply \propref{prop:pairwise_low_rank}. This means that, we do not  need to compute the matrix $\mR$ itself at any stage.  Instead, we replace the FwFM pairwise interaction term in \eqref{eq:fieldwise_fwfm}, with the outcome of the proposition in \eqref{eq:efficient_pairwise}. Namely, we compute $\mP = \mU \mV$ in $O(\rho m k)$ time, and then compute $\frac{1}{2} \left( \sum_{i=1}^m \evd_i \| \vv_i \|^2 + \sum_{i=1}^\rho \eve_i \| \mP_{i,:} \|^2 \right)$ in $O(\rho k + m k)$ time.

We call the resulting model a diagonal plus low-rank field weighted factorization machine, or DPLR-FwFM. Although this is not our main objective, a nice byproduct is that training  also becomes slightly faster, since the above two steps cost $O(\rho m k)$ time instead the na\"ive $O(m^2 k)$. This is important for some online advertising systems that require deploying a ``fresh'' model as soon as possible, to adapt to the quickly evolving marketplace conditions \citep{debt,trends,zhang2020retrain}. However, for inference during \emph{ranking} we need to take the result of the proposition one step further.

\subsubsection{Fast inference for item ranking}
Observe that the field vectors embedded into the rows of $\mV$ can be decomposed into context and item vectors:
\[
\mV = \begin{bmatrix}
      \mV_\ctx \\
      \vphantom{m} \\
      \mV_\items
\end{bmatrix}
\]
The corresponding columns of $\mU$ can be decomposed similarly:
\[
\mU = \begin{bmatrix}
    \mU_\ctx & \mU_\items
\end{bmatrix}
\]
Consequently, matrix $\mP$ can be written as:
\[
    \mP = \mU_\ctx \mV_\ctx + \mU_\items \mV_\items,
\]
The product $\mU_\ctx \mV_\ctx$ can be computed only once when ranking with a given context. The sum $\sum_{i=1}^m \evd_i \| \vv_i \|^2$ can also be similarly decomposed as
\[
  \sum_{i=1}^m \evd_i \| \vv_i \|^2 
    = \sum_{i\in \ctx} \evd_i \| \vv_i \|^2 + \sum_{i \in \items} \evd_{i} \| \vv_{i} \|^2,
\]
where $\sum_{i\in \ctx} \evd_i \| \vv_i \|^2$ can be computed once when ranking with a given context.
To summarize, during ranking, the pairwise interaction term of each item during ranking using DPLR-FwFM can be computed by the following algorithm: \\
\fbox{
\begin{minipage}[p]{.95\linewidth}
    \begin{center}
    \textbf{Algorithm 1} \\
    DPLR FwFM pairwise interaction inference with cached context
    \end{center}
    \noindent\hrulefill\\
    \textbf{Input}: 
    \begin{itemize}
        \item The field matrix $\mV$, partitioned into $\mV_\ctx$, $\mV_\items$
        \item $\mU \in \R^{\rho \times m}$, $\vd \in \R^m$, $\ve \in \R^\rho$, that form the decomposition $\mR = \mU^T \diag(\ve) \mU + \diag(\vd)$, with $\mU$ partitioned into $\mU_\ctx$, $\mU_\items$
    \end{itemize}
    \textbf{Output}: The pairwise interactions $\sum_{i=1}^m \sum_{j=1}^m \langle \vv_i, \vv_j \rangle \emR_{i,j}$ \\
    \textbf{Steps}:
    \begin{enumerate}
        \item Once per context, compute:
        \begin{enumerate}
            \item Compute $\mP_\ctx = \mV_\ctx \mU_\ctx$ 
            \item Compute $s_\ctx = \sum_{i \in \ctx} \evd_i \| \vv_i \|^2$ 
        \end{enumerate}
        \item Compute $\mP = \mP_c + \mU_\items \mV_\items$ 
        \item Return $s_\ctx + \sum_{i\in \items} \evd_{i} \| \vv_{i} \|^2 + \sum_{i=1}^\rho \eve_i \| \mP_{i,:}  \|^2$
    \end{enumerate}    
\end{minipage}
}

\begin{table*}[h]
    \caption{Results for public data-sets. A lower LogLoss or MSE, and a higher AUC are better. The pruned sparsity column shows the percentage of entries in the pruned model equivalent to a DPLR model of the given rank, in terms of the number of parameters. The FM/FwFM columns show results without pruning or rank reduction, for reference, and are replicated across ranks. The next columns show results with pruning and rank reduction. Finally, the last column shows the improvement percentage of the DPLR over the equivalently pruned model.}
    \label{tbl:public_ds_results}
    \centering   
    \footnotesize
    \begin{tabular}{lll|rrrrp{.08\linewidth}|rrrrp{.08\linewidth}}
    \toprule
     \multirow[t]{2}{*}{Dataset (Metric)} & \multirow[t]{2}{*}{Rank} & \multirow{2}{*}{\shortstack[l]{Pruned \\ sparsity}} & \multicolumn{5}{c}{$k=8$} & \multicolumn{5}{c}{$k=16$} \\
     \cline{4-8} \cline{9-13}
      & &  & FM & FwFM & DPLR & Pruned & {DPLR vs \par Pruned (\%)} & FM & FwFM & DPLR & Pruned & {DPLR vs \par Pruned (\%)} \\
    \midrule
    \multirow[t]{5}{*}{Criteo (AUC)} & 1 & 5.4\% & 0.8044 & 0.8088 & 0.8050 & 0.8008 & 0.52\% & 0.8069 & 0.8100 & 0.8069 & 0.8020 & 0.61\% \\
     & 2 & 10.8\% & 0.8044 & 0.8088 & 0.8066 & 0.8049 & 0.21\% & 0.8069 & 0.8100 & 0.8080 & 0.8057 & 0.29\% \\
     & 3 & 16.2\% & 0.8044 & 0.8088 & 0.8067 & 0.8065 & 0.02\% & 0.8069 & 0.8100 & 0.8083 & 0.8076 & 0.09\% \\
     & 4 & 21.6\% & 0.8044 & 0.8088 & 0.8069 & 0.8074 & -0.05\% & 0.8069 & 0.8100 & 0.8085 & 0.8086 & -0.01\% \\
     & 5 & 27\% & 0.8044 & 0.8088 & 0.8070 & 0.8078 & -0.10\% & 0.8069 & 0.8100 & 0.8085 & 0.8091 & -0.07\% \\
    \midrule
    \multirow[t]{5}{*}{Criteo (LogLoss)} & 1 & 5.4\% & 0.4470 & 0.4429 & 0.4467 & 0.4510 & 0.97\% & 0.4449 & 0.4417 & 0.4449 & 0.4508 & 1.30\% \\
     & 2 & 10.8\% & 0.4470 & 0.4429 & 0.4454 & 0.4469 & 0.34\% & 0.4449 & 0.4417 & 0.4437 & 0.4464 & 0.60\% \\
     & 3 & 16.2\% & 0.4470 & 0.4429 & 0.4450 & 0.4451 & 0.01\% & 0.4449 & 0.4417 & 0.4435 & 0.4444 & 0.21\% \\
     & 4 & 21.6\% & 0.4470 & 0.4429 & 0.4443 & 0.4443 & 0.00\% & 0.4449 & 0.4417 & 0.4431 & 0.4431 & -0.00\% \\
     & 5 & 27\% & 0.4470 & 0.4429 & 0.4443 & 0.4439 & -0.10\% & 0.4449 & 0.4417 & 0.4430 & 0.4426 & -0.08\% \\
    \midrule
    \multirow[t]{5}{*}{Avazu (AUC)} & 1 & 6.4\% & 0.7768 & 0.7777 & 0.7764 & 0.7727 & 0.49\% & 0.7787 & 0.7796 & 0.7776 & 0.7738 & 0.49\% \\
     & 2 & 12.9\% & 0.7768 & 0.7777 & 0.7769 & 0.7756 & 0.16\% & 0.7787 & 0.7796 & 0.7784 & 0.7779 & 0.05\% \\
     & 3 & 19.3\% & 0.7768 & 0.7777 & 0.7772 & 0.7764 & 0.10\% & 0.7787 & 0.7796 & 0.7789 & 0.7789 & 0.00\% \\
     & 4 & 25.8\%  & 0.7768 & 0.7777 & 0.7772 & 0.7770 & 0.03\% & 0.7787 & 0.7796 & 0.7789 & 0.7794 & -0.06\% \\
     & 5 & 32.2\% & 0.7768 & 0.7777 & 0.7774 & 0.7774 & -0.00\% & 0.7787 & 0.7796 & 0.7788 & 0.7796 & -0.10\% \\
    \midrule
    \multirow[t]{5}{*}{Avazu (LogLoss)} & 1 & 6.4\%  & 0.3811 & 0.3808 & 0.3817 & 0.3841 & 0.62\% & 0.3800 & 0.3795 & 0.3810 & 0.3830 & 0.52\% \\
     & 2 & 12.9\% & 0.3811 & 0.3808 & 0.3812 & 0.3817 & 0.11\% & 0.3800 & 0.3795 & 0.3801 & 0.3802 & 0.02\% \\
     & 3 & 19.3\% & 0.3811 & 0.3808 & 0.3812 & 0.3814 & 0.07\% & 0.3800 & 0.3795 & 0.3801 & 0.3800 & -0.02\% \\
     & 4 & 25.8\% & 0.3811 & 0.3808 & 0.3810 & 0.3811 & 0.01\% & 0.3800 & 0.3795 & 0.3800 & 0.3796 & -0.12\% \\
     & 5 & 32.2\% & 0.3811 & 0.3808 & 0.3811 & 0.3810 & -0.05\% & 0.3800 & 0.3795 & 0.3800 & 0.3796 & -0.11\% \\
    \midrule
    \multirow[t]{2}{*}{Movielens (MSE)} & 1 & 33.1\% & 0.7431 & 0.7407 & 0.7449 & 0.7572 & 1.61\% & 0.7376 & 0.7394 & 0.7445 & 0.7630 & 2.43\% \\
     & 2 & 64.3\%  & 0.7431 & 0.7407 & 0.7418 & 0.7464 & 0.62\% & 0.7376 & 0.7394 & 0.7398 & 0.7484 & 1.15\% \\
    \bottomrule
    \end{tabular}
\end{table*}

The first step is computed only once per context in $O(\rho |\ctx| k)$ time, and its results are cached. The last two steps are computed for every item, and their complexity is $O(\rho |\items| k)$ per item, as we desire.

\section{Experiments}\label{sec:experiments}
In this section we present experimental results demonstrating that our approach significantly reduces the cost of item recommendation serving without compromising accuracy, in comparison to the pruning approach. 
Finally, we demonstrate that finding a DPLR factorization of the field interaction matrix of a regular FwFM, instead of training a DPLR representation, may not be a good approach in practice.

\subsection{Accuracy on public data-sets}
We compare our approach to an FwFM, an FM, and a pruned FwFM using the Criteo Display Advertising Challenge\footnote{https://www.kaggle.com/c/criteo-display-ad-challenge} data-set, comprising of 39 fields and $\sim45\mathrm{M}$ samples, the Avazu challenge dataset\footnote{https://www.kaggle.com/c/avazu-ctr-prediction} comprising of 33 fields and $\sim40\mathrm{M}$ samples, and the Movielens 1M \cite{harper2015movielens} data-set, with 8 fields and $\sim1\mathrm{M}$ samples. Out of the available MovieLens data-sets, we chose the one with the most number of informative context and item fields. The timestamp is converted into year, month, day of week, and hour of day, creating a 11 field data-set. The ``genres'' field in the MovieLens data-set has multiple values, and as described in \secref{sec:inefficiency}, we average the genre embedding vectors to produce one genre vector.

All data-sets were randomly split into 80\% training, 10\% validation, and 10\% test sets. The validation sets were used for tuning the learning rate using Optuna \cite{optuna_2019}. Features with less than 10 occurrences in the training set, and those appearing in the test and validation set that did not appear in the training set, are replaced with a special ``rare feature''. Numerical features are binned similarly to the strategy used by the Criteo Challenge winners \cite{criteo_winners}, via the $x \to \lfloor \ln^2 (x) \rfloor$ function. We tested embedding dimensions of $8$ and $16$. 

We compare DPLR with pruning by matching the number of entries we retain in the pruned field interaction matrix to the number of parameters in the DPLR model. For a rank $\rho$, we use $\rho (m + 1)$ parameters, and the corresponding pruned model is left with the largest magnitude $\rho (m + 1)$ field interaction coefficients. Equivalently, we keep $100 \times \frac{2 \rho (m + 1)}{m (m - 1)}$ percent of the field interactions. 

The results are summarized in \tabref{tbl:public_ds_results}. We see that for aggressive pruning, retaining less than 20\% of the interactions, the DPLR models out-perform, or perform on par, compared the pruned models. For MovieLens, due to a low number of fields, a DPLR model is equivalent to mild pruning, and still outperforms a pruned model.

\Comment{indeed degrades performance,\ariel{it is not clear where the 20\% is coming from, the table is very large}\alex{Added an explicit column to the table} and our DPLR solution with a rank of at least two outperforms the pruned models.\ariel{I don't think 'at least' is appropriate here}\alex{Why? At least and at most are "words" for $\geq$ and $\leq$}}

\subsection{Synthetic latency measurements}
We use a simulation to compare inference time for a large batch of items for various amounts of context fields. We use the Criteo dataset as our example for the number of fields, and simulate vectors from 40 fields. We designate $k \in \{ 10, 15, 20, 25, 30 \}$ of them as ``context'' fields, whereas the rest are item fields. Note, that since Criteo is anonymized, we have no way of knowing which fields are contextual. We measure the scoring time for DPLR models of ranks $\rho \in \{ 1, 2, 3 \}$, and equivalently pruned models that have identical number of parameters by retaining $\rho (m + 1)$ field interactions. The experimental code is implemented in Python, with the score computation parts implemented in Cython to eliminate Python runtime overhead. Moreover, to ensure CPU cache-friendliness of scoring using a pruned model, we fit the 3D array of latent vectors of an entire auction in an appropriate memory format. 

The scoring times for various auction (batch) sizes are reported in \figref{fig:synthetic_latency}, together with scoring times using a regular FwFM. To measure standard error, the average time per auction for each configuration is measured by repeating every scoring experiment 50 times. To ensure precise measurement of each experiment due to timer resolution, each experiment comprises of 10 scored auctions. The experiments are conducted on a 2019 MacBook Pro laptop with a 2.4 GHz 8-Core Intel Core i9 CPU. 

While there are caveats in synthetic timing benchmarks that stem from the implementation and the hardware used, the results allow appreciating the potential of our approach: indeed, scoring using a DPLR model is significantly faster than scoring by a pruned model. Since we made the code available, readers can also appreciate the implementation simplicity of the DPLR scoring algorithm - it relies on batched matrix-matrix products, whose efficient implementations are provided in a variety of numerical computing packages.

\begin{figure*}[h]
    \centering
    \includegraphics[width=.8\linewidth]{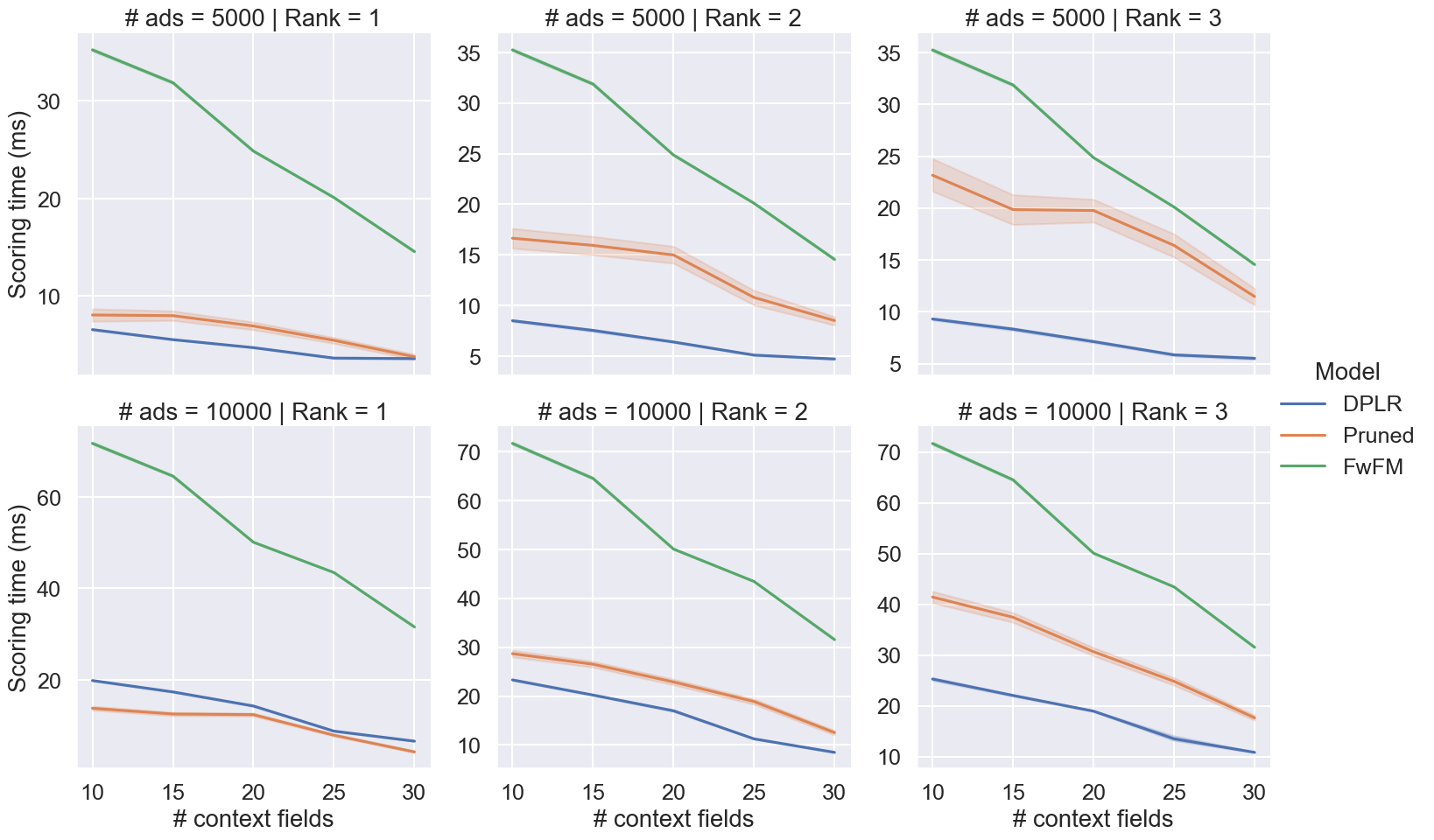}
    \caption{Synthetic timing measurement of a single auction for various auction sizes, DPLR model ranks, and amounts of context fields. Standard error is plotted as a narrow band around each line.}
    \label{fig:synthetic_latency}
\end{figure*}

\subsection{Proprietary online advertising system}
We validate our approach on an online advertising system of a large commercial company, by comparing the trained model's accuracy, and the ad ranking latency. We perform our tests on FwFM models for \emph{click-through rate} (CTR) prediction having 82 fields. To conform to the strict latency requirements, the FwFM's field interaction matrix is pruned to retain only 10\% of the largest magnitude entries.

\subsubsection{Accuracy experiment}
The model trains periodically in a ``sliding window'' mode, meaning that in each period associated with time $T$, the model is trained on logged data from the month before $T$, and is evaluated on logged data from the day before $T$. After sub-sampling to allow training within a reasonable amount of time, the training set comprises tens of millions logged user-ad interactions. We train and evaluate the model on 7 consecutive intervals using the LogLoss and AUC metrics, and average the results by weighting them according to the number of evaluation samples. The trained DPLR model is compared to an FwFM pruned to 10\% of the entries, due to various restrictions of the production code-base. The performance of the models, and the improvement of DPLR over pruned models, are summarized in Table \ref{tbl:dsp_offline}  Surprisingly, the DPLR models perform \emph{better} on the evaluation set than a regular FwFM models. We believe it means that a low-rank field interaction matrix is a good regularization prior for this specific domain. Nevertheless, as our results on public data-sets show, this is not the case in general.

\Comment{
\begin{table*}[h]
    \caption{AUC and LogLoss improvements versus a pruned FwFM retaining 10\% of the interactions on proprietary data. Higher is better.}
    \label{tbl:dsp_offline}
    \centering
    \begin{tabular}{lrrrrrr}
        \toprule
        Metric & \multicolumn{6}{c}{Rank} \\
        \cmidrule(lr){2-7} 
        ~ & 1 & 2 & 3 & 4 & 5 & 6 \\
        \midrule
        LogLoss lift (\%) & -0.478\% & 0.228\% & 0.363\% & 0.409\% & 0.445\% & 0.404\% \\
        AUC lift (\%) & -0.132\% & 0.071\% & 0.098\% & 0.112\% & 0.122\% & 0.115\% \\
    \bottomrule
    \end{tabular}
\end{table*}
}

\begin{table}[h]
    \caption{AUC and LogLoss improvements versus an FwFM on proprietary data. Higher is better.}
    \label{tbl:dsp_offline}
    \centering
    \begin{tabular}{lrrrrrr}
        \toprule
        Metric & \multicolumn{6}{c}{Rank} \\
        \cmidrule(lr){2-7} 
        ~ & 1 & 2 & 3 & 4 & 5 & 6 \\
        \midrule
        LogLoss lift  & -0.48\% & 0.23\% & 0.36\% & 0.41\% & 0.44\% & 0.404\% \\
        AUC lift  & -0.13\% & 0.07\% & 0.10\% & 0.11\% & 0.12\% & 0.115\% \\
    \bottomrule
    \end{tabular}
\end{table}

\subsubsection{Latency experiment}
We deploy the model with rank 3 to a test environment that serves a small portion of the traffic, a few thousand ad ranking queries per minute. The rank was chosen to correspond to the pruning of 90\% of the field interaction entries in terms of the number of parameters. Out of the model's 63 fields, 38 are item fields, and therefore we can theoretically expect approximately $40\%$ latency inference speed improvement in the best case.

We measure the latency incurred by the inference for ad CTR prediction, and the total latency incurred by ranking all eligible ads for a given query. Table \ref{tab:inference_latencies_dsp} summarizes the results, and \appref{app:latency_charts} presents full time-series plots. Evidently, the inference latency is improved by $20\%-30\%$, whereas the query latency by $5\%$, as CTR prediction is only one component of the ad query serving algorithm.

\begin{table}[h]
    \caption{Average and high percentile latency of a single inference or ranking operation in our production system. In each minute, the average, 95th percentile, and the 99th percentile latency for inference, and the 95th percentile latency for ranking. The measurements were averaged over a 10 hour period. The lifts represent improvements - higher is better.}
    \label{tab:inference_latencies_dsp}
    \centering

    \begin{tabular}{lrrrr}
      \toprule
      & \multicolumn{3}{c}{Inference per ad} & Ranking \\
      \cmidrule(lr){2-4}
      \cmidrule(lr){5-5}
      & Average & P95 & P99 & P95 \\
    \toprule
    \Comment{
        Low rank & 9508.92 (15.95) & 24070.06 (22.49) & 32211.04 (60.83) \\
        Pruned & 14465.85 (27.03) & 33952.83 (93.62) & 43276.51 (94.95) \\
        \midrule
    }
    Lift (\%) & 34.27\% & 29.11\% &  25.57\% & 5.45\% \\
    \bottomrule
    \end{tabular}
\end{table}

\subsection{Post-hoc factorization of the field interaction matrix}\label{app:posthoc}
An alternative approach to training a DPLR representation of the field interaction matrix can be training a regular FwFM model, and computing the best DPLR representation we can afterwards. We call this the ``post-hoc'' approach.

Unfortunately, this may not be a good idea. Suppose we obtained an approximation $\tilde{\mR}$ of the model's field interaction matrix $\mR$. Let $\tilde{\mR} = \mR + \mE$, where $\mE$ is the approximation error. Denote by $\lambda_i(\cdot)$ and $\sigma_i(\cdot)$ the $i^\text{th}$ largest eigenvalue and singular value of a given matrix, respectively. By \identref{ident:pairwise_as_trace} we obtain
\begin{align*}
    \sum_{i=1}^m \sum_{j=1}^m \langle \vv_i, \vv_j \rangle \tilde{\emR}_{i,j} 
     &= \Tr(\mV^T \tilde{\mR} \mV) \\
     &= \Tr(\mV^T \mR \mV) + \Tr(\mV^T \mE \mV) \\ 
     &\leq \Tr(\mV^T \mR \mV) + \sum_{i=1}^m \lambda_i(\mV \mV^T) \sigma_i(\mE),
\end{align*}
where the last inequality stems from the Von Neumann trace inequality. Thus, the approximation error boils down to the singular value spectrum of the error matrix. We took the field interaction weights obtained from an FwFM trained on the Criteo data-set, and computed the singular value spectrum of the approximation errors obtained from two approximations: (a) a DPLR approximation of rank 5, computed by minimizing the nuclear norm of the error, and (b) a pruned field interaction matrix where the top 200 entries (same number of parameters as the rank-5 DPLR approximation) were left. The error singular-value spectrum of both approximations is plotted in Figure \ref{fig:error_spectrum}. We observe that the large eigenvalues of the post-hoc DPLR approximation error are \emph{much} larger than the ones of the pruned approximation error. Of course, an upper bound cannot replace a true experimental result on the given dataset, but it gives a rough idea of weather this direction is worth pursuing.

\begin{figure}[h]
    \centering
    \includegraphics[width=.75\columnwidth]{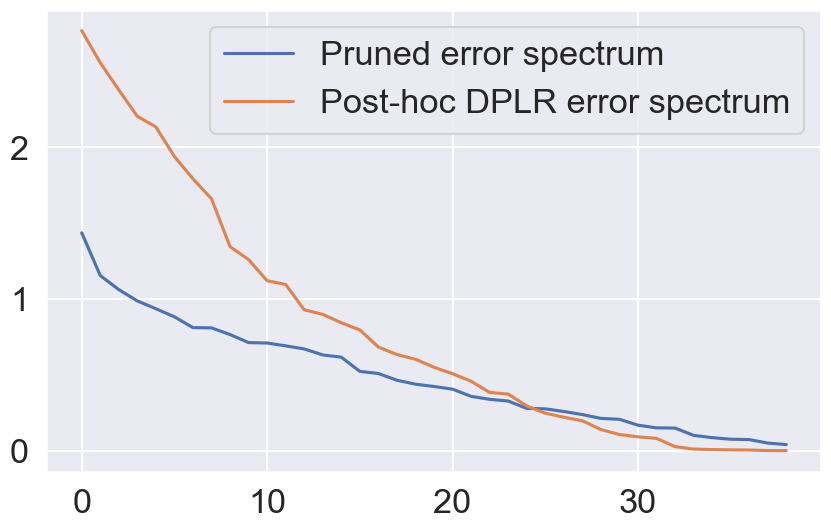}
    \caption{Singular value spectrum of two approximations of the true field interaction matrix. The singular value index is in the horizontal axis, whereas its value is in the verical axis.}
    \label{fig:error_spectrum}
\end{figure}
\section{Conclusions and future work}
In this work we proposed learning a diagonal plus low-rank decomposition of the field interaction matrix of FwFM models as an alternative to the commonly used pruning heuristic in large scale low-latency recommendation systems. We demonstrated that our approach has the potential to outperform pruned models in in terms of both item recommendation speed and model accuracy.

Looking at \eqref{eq:efficient_pairwise}, we observe that the columns of $\mU$ have a notion of ``field importance'' - if all the entries of $\mU_{:, i}$ are negligible, the effect of the $i$-th field is negligible, and it can be discarded. Hence, another future direction is mathematically or experimentally quantifying this notion of field importance, as an alternative to the approach in \cite{kaplan2021dynamic}. This is especially important for real-time systems, where just reducing the number of fields may have a tremendous effect on latency and recommendation costs.
\clearpage
\balance
\bibliographystyle{ACM-Reference-Format}
\bibliography{references}

\clearpage

\appendix
\pagebreak
\section{Latency charts}\label{app:latency_charts}
In \figref{fig:latency_graphs} we observe that the improvement in the latency of the low-rank solution is consistent over time.

\begin{figure}[b]
    \centering
    \begin{subfigure}{0.9\columnwidth}
        \includegraphics[width=\textwidth]{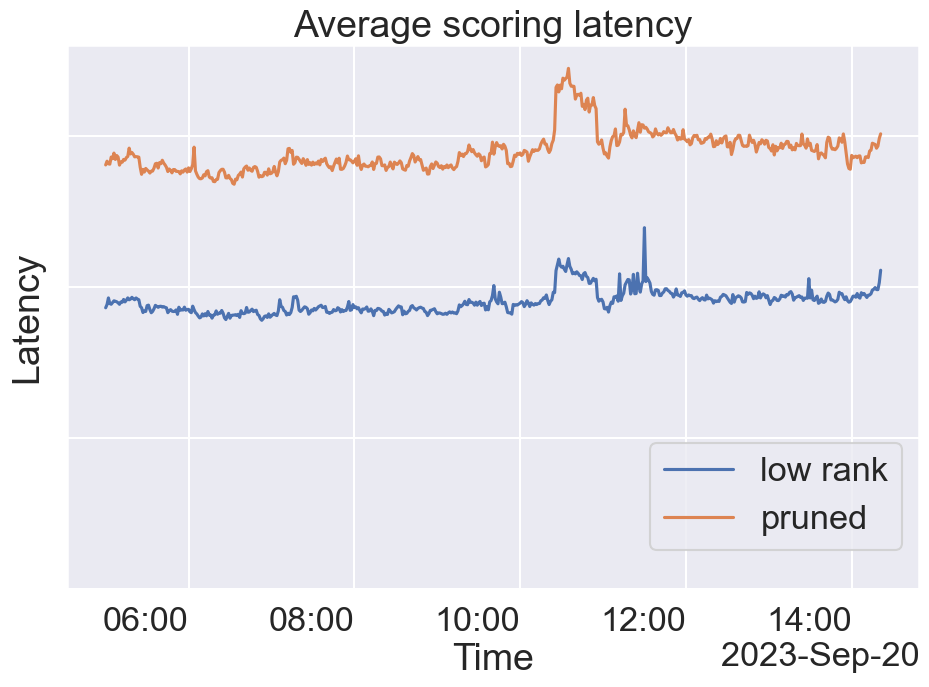}
        \caption{}
    \end{subfigure}
    \hfill
    \begin{subfigure}{0.9\columnwidth}
        \includegraphics[width=\textwidth]{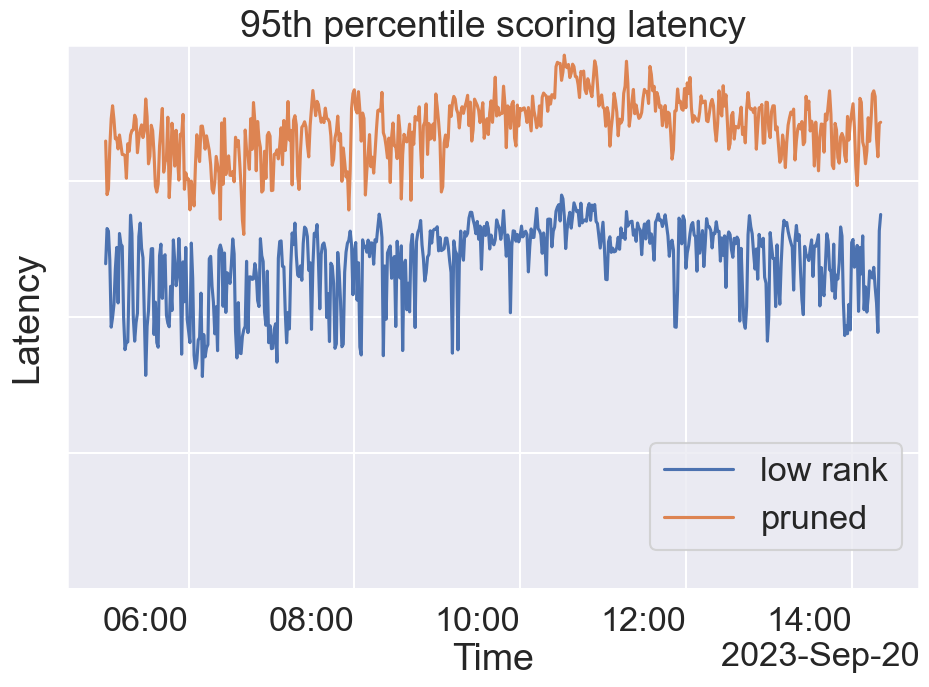}
        \caption{}
    \end{subfigure}
    \hfill
    \begin{subfigure}{0.9\columnwidth}
        \includegraphics[width=\textwidth]{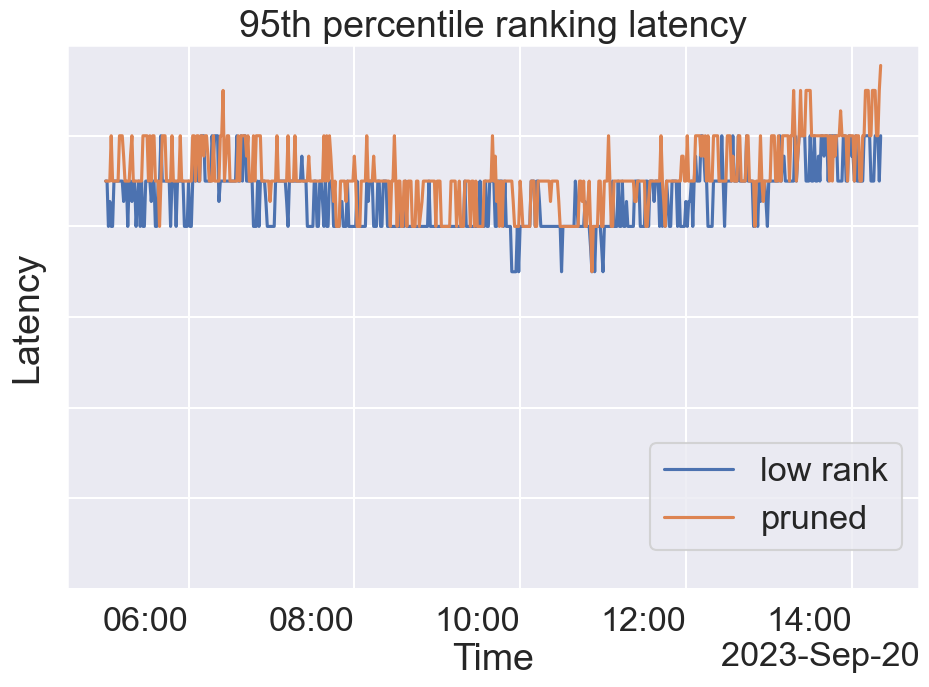}
        \caption{}
    \end{subfigure}
    \caption{Latency graphs over a 10 hour period.}
    \label{fig:latency_graphs}
\end{figure}
\vfill
\Comment{
\pagebreak
\section{Post-hoc factorization of the field interaction matrix}\label{app:posthoc}
An alternative approach to training a DPLR representation of the field interaction matrix can be training a regular FwFM model, and computing the best DPLR representation we can afterwards. We call this the ``post-hoc'' approach.

Unfortunately, this may not be a good idea. Suppose we obtained an approximation $\tilde{\mR}$ of the model's field interaction matrix $\mR$. Denoting $\tilde{\mR} = \mR + \mE$, where $\mE$ is the approximation error, by \identref{ident:pairwise_as_trace} we obtain
\begin{align*}
    \sum_{i=1}^m \sum_{j=1}^m \langle \vv_i, \vv_j \rangle \tilde{\emR}_{i,j} 
     &= \Tr(\mV^T \tilde{\mR} \mV) \\
     &= \Tr(\mV^T \mR \mV) + \Tr(\mV^T \mE \mV) \\ 
     &\leq \Tr(\mV^T \mR \mV) + \sum_{i=1}^m \lambda_i(\mV \mV^T) \sigma_i(\mE),
\end{align*}
where the last inequality stems from the Von Neumann trace inequality. Thus, the approximation error boils down to the singular value spectrum of the error matrix. We took the field interaction weights obtained from an FwFM trained on the Criteo data-set, and computed the singular value spectrum of the approximation errors obtained from two approximations: (a) a DPLR approximation of rank 5, computed by minimizing the nuclear norm of the error, and (b) a pruned field interaction matrix where the top 200 entries (same number of parameters as the rank-5 DPLR approximation) were left. The error singular spectrum of both approximations is plotted in Figure \ref{fig:error_spectrum}.\ariel{Missing axis titles}\alex{Clarified} We observe that the large eigenvalues of the post-hoc DPLR approximation error are \emph{much} larger than the ones of the pruned approximation error.

\begin{figure}[h]
    \centering
    \includegraphics[width=.95\columnwidth]{error_spectrum.png}
    \caption{Singular value spectrum of two approximations of the true field interaction matrix. The singular value index is in the horizontal axis, whereas its value is in the verical axis.}
    \label{fig:error_spectrum}
\end{figure}
}
\end{document}